\documentclass[11pt]{article}
\usepackage[T1]{fontenc}
\usepackage[utf8]{inputenc}
\usepackage{lmodern}
\usepackage{amsmath,amssymb,amsthm}
\usepackage[margin=1in]{geometry}
\usepackage{microtype}
\usepackage{hyperref}
\hypersetup{colorlinks=true,linkcolor=blue,citecolor=blue,urlcolor=blue}

\newtheorem{theorem}{Theorem}[section]
\newtheorem{lemma}[theorem]{Lemma}

\newtheorem{proposition}[theorem]{Proposition}
\newtheorem{remark}[theorem]{Remark}

\newcommand{\cl}{\mathcal C}

\title{An Arboricity-Sensitive Algorithm for the $K_r-e$-Free Graph Sandwich Problem}
\author{Min Chih Lin\\
\small Instituto de C\'alculo and Departamento de Computaci\'on,\\
\small Universidad de Buenos Aires, Argentina; CONICET, Argentina\\
\small \texttt{oscarlin@dc.uba.ar}
\and
Nat\'an Vekselman\\
\small Departamento de Computaci\'on, Universidad de Buenos Aires, Argentina\\
\small \texttt{natanvek11@gmail.com}}
\date{}

\begin{document}
\maketitle
\begin{abstract}
For a fixed integer $r\geq4$, the $K_r-e$-free graph sandwich problem asks whether, given graphs $G_1\subseteq G_2$ on the same vertex set, there is an induced-$K_r-e$-free graph $H$ between them. We give a deterministic algorithm taking $O(n+\alpha(G_2)^{r-3}m_2)$ time and space, where $m_2=|E(G_2)|$ and $\alpha(G_2)$ is the arboricity of $G_2$. In particular, the diamond-free case takes $O(n+\alpha(G_2)m_2)$ time. This improves the direct $O(n^r m_2)$ implementation of the previously known forced-edge closure. Our implementation maintains components of common neighborhoods indexed by $(r-3)$-cliques. A filtered frontier supports their merges within the clique-listing bound, while completion events avoid repeatedly searching for affected cliques. On feasible instances the output is contained in every feasible sandwich, independently of processing order. Applying the closure to $(G,K_n)$ gives an $O(n^{r-1})$-time bound for partitioned and nonpartitioned probe $K_r-e$-free recognition, improving the $O(n^{r+2})$ bound obtained from the direct sandwich closure. We also describe a direct static recognizer based on the same local characterization.
\end{abstract}
\noindent\textbf{Keywords:} graph sandwich; forbidden induced subgraph; arboricity; diamond-free graph; probe graph; clique enumeration.

\section{Introduction}
Given graphs $G_1=(V,E_1)$ and $G_2=(V,E_2)$ with $E_1\subseteq E_2$, a graph sandwich for a graph class $\mathcal G$ is a graph $H=(V,E_H)\in\mathcal G$ satisfying $E_1\subseteq E_H\subseteq E_2$ \cite{golumbic1995}. Write $n=|V|$, $m_2=|E_2|$, and $\alpha=\alpha(G_2)$, where the arboricity $\alpha(G)$ is the minimum number of forests whose edge sets partition $E(G)$. We consider $\mathcal G$ to be the class excluding an \emph{induced} $K_r-e$, for fixed $r\geq4$. For $r=4$ this is the diamond-free graph sandwich problem. Dantas et al.\ gave a forced-edge closure algorithm with a direct $O(n^r m_2)$ implementation for fixed $r$ \cite{dantas2011}. Subsequent studies address the complexity of other forbidden-subgraph sandwich problems and related classes \cite{dantas2015,alvarado2019,dantas2019}. Our contribution is a sparse implementation of the $K_r-e$-free closure, with running time governed by clique listing rather than a search over vertex tuples.

\begin{theorem}[Main result]\label{thm:main}
For every fixed $r\geq4$, $K_r-e$-free graph sandwich can be decided, and a solution produced when one exists, in deterministic $O(n+\alpha(G_2)^{r-3}m_2)$ time and space. The output $H_{\min}$ is contained in every feasible sandwich. In particular, for $r=4$ the bound is $O(n+\alpha(G_2)m_2)$.
\end{theorem}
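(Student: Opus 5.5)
The plan is to reduce the statement to a local closure rule indexed by $(r-3)$-cliques, and then to implement that closure within the Chiba--Nishizeki clique-listing bound. First I would establish the characterization. Label the vertices of $K_r-e$ so that the missing pair is $\{a,c\}$ and fix one further vertex $b$; the remaining $r-3$ vertices form a clique $S$. Then a graph $H$ contains an induced $K_r-e$ iff some $(r-3)$-clique $S$ has an induced $P_3$ $a\!-\!b\!-\!c$ inside $G[N_H(S)]$. Hence $H$ is $K_r-e$-free iff for every $(r-3)$-clique $S$ of $H$, the graph $H[N_H(S)]$ is a disjoint union of cliques. For $r=4$, $S$ is a single vertex.

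From this I get the forced-edge rule. Suppose $S$ is a clique of forced edges and $a\!-\!b\!-\!c$ is a path of forced edges in $N(S)$. Then every feasible $H$ must contain $ac$. If $ac\notin E_2$, the instance is infeasible.

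\textbf{Correctness and minimality.} Start from $E_1$ and apply the rule until it stabilizes, getting $H_{\min}$.
\begin{itemize}
\item By induction on the order of insertions, every inserted edge lies in every feasible sandwich, because the rule's premises are present in each feasible $H$.
\item Hence the result does not depend on processing order, and $H_{\min}\subseteq H$ for every feasible $H$.
\item If the closure ends without a violation, then for every $(r-3)$-clique $S$, each component of $H_{\min}[N(S)]$ is a clique. By the characterization, $H_{\min}$ is itself a $K_r-e$-free sandwich.
\end{itemize}
So the closure decides the problem and outputs the claimed minimum sandwich.

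\textbf{Implementation.} Every clique that ever appears lies in $G_2$. By Chiba--Nishizeki, $G_2$ has $O(\alpha^{k-2}m_2)$ cliques of size $k$ for each fixed $k\ge 2$, and they can be listed within that time. I would precompute all $(r-3)$- and $(r-2)$-cliques of $G_2$, which costs $O(n+\alpha^{r-3}m_2)$. For each $(r-3)$-clique $S$ of $G_2$ I maintain two things:
\begin{itemize}
\item a counter of how many of its edges are already present in $H$, so that $S$ becomes \emph{active} exactly when the counter hits $\binom{r-3}{2}$;
\item for each $v\in N_{G_2}(S)$, a counter of present edges from $v$ to $S$, so that $v$ joins $N_H(S)$ exactly when that counter hits $r-3$.
\end{itemize}
Each edge insertion $xy$ is charged to the $G_2$-cliques containing it. It updates the counters of the $(r-3)$-cliques containing $xy$, and of the pairs (clique $S$, vertex $v$) such that $S\cup\{v\}$ is an $(r-2)$-clique containing $xy$. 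These are the completion events, and their total number is bounded by the number of $(r-2)$-cliques times $O(1)$.

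For an active $S$, I keep a union-find or merge structure on the present members of $N(S)$. An $H$-edge inside $N(S)$ between two components triggers a merge, and the merged component must become a clique. Its missing edges are queued: each is either absent from $E_2$, giving rejection, or is inserted at most once overall. The pairs $(S,\{u,w\})$ with $u,w\in N(S)$ and $uw\in E_2$ correspond to $(r-1)$-cliques of $G_2$, which gives too large a bound. To avoid this, I would have the merge test only pairs across the two components. Each such pair must be an edge of $G_2$ or the instance is rejected immediately. Therefore the total merge work for $S$ is $O(\text{edges of } G_2 \text{ inside } N_{G_2}(S))$, charged only once per edge insertion.

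\textbf{Main obstacle.} The hard step is the accounting that charges this merge work to $(r-2)$-cliques rather than $(r-1)$-cliques, so the bound is $O(\alpha^{r-3}m_2)$ and not $O(\alpha^{r-2}m_2)$. This is the role of the abstract's ``filtered frontier''. I would try a small-to-large scheme: keep, for each component, a frontier of $G_2$-neighbors inside $N(S)$ that are not yet in the component. Each inserted $H$-edge $uw$ is then discovered from the pair $(S\cup\{u\},w)$, i.e.\ from an $(r-2)$-clique of $G_2$ with a distinguished vertex. This makes the work per $S$ linear in the sum of $G_2$-degrees of $N(S)$ restricted to members. I would then verify that summing over $S$ gives $\sum_S\sum_{v\in N(S)}1$, which is $O(\#(r-2)\text{-cliques})$, plus the cost of the insertions themselves. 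Rejections at the first forced non-$E_2$ pair keep the failing run within the same bound.
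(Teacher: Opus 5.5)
Your closure and least-solution argument is fine and is essentially the paper's (Lemma~\ref{lem:local} plus induction on forced edges). The complexity part, however, has a genuine gap, and it starts with an off-by-one in the target.

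\textbf{The target is $(r-1)$-cliques.} By the Chiba--Nishizeki bound you quote, there are $O(\alpha^{t-2}m_2)$ cliques of size $t$. So $c_{r-1}(G_2)=O(\alpha^{r-3}m_2)$ is \emph{exactly} the budget of the theorem, while $(r-2)$-cliques would give $O(\alpha^{r-4}m_2)$. Pushing merge work down to $(r-2)$-cliques is therefore unnecessary, and your sketch of it is wrong. The $G_2$-neighbors of $v$ inside $U_S=N_{G_2}(S)$ number $\deg_{G_2[U_S]}(v)$. Hence the frontiers total $\sum_S 2|E(G_2[U_S])|=(r-1)(r-2)\,c_{r-1}(G_2)$, not $\sum_S|U_S|$. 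Your pairwise-test accounting, $O(|E(G_2[U_S])|)$ per world, was already within budget; what it lacks is addressed below.

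\textbf{The $(r-1)$-cliques must be enumerated.} You list only $(r-3)$- and $(r-2)$-cliques, yet both the singleton frontiers and the merge triggers are indexed by the $(r-1)$-cliques $S\cup\{x,y\}$. The merge of $x$ and $y$ in world $S$ must fire when the \emph{last} edge of $S\cup\{x,y\}$ becomes present. That edge may be $xy$, an edge from $x$ or $y$ to $S$, or an edge inside $S$, but you only describe a trigger on insertion of $xy$. The paper lists all cliques up to order $r-1$ and propagates completion events through immediate-subclique incidences precisely to handle every case.

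\textbf{A deterministic merge test in linear space is missing.} Testing every crossing pair for membership in $E_2$ needs constant-time adjacency queries in $G_2$. An $n\times n$ matrix violates the space bound, and hashing gives only randomized bounds. A small-to-large frontier of all $G_2$-neighbors of a component, as you describe it, does not help: it supplies no crossing-pair test and gives only a logarithmic amortized cost per cell.

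The idea you are missing is to \emph{filter}:
\begin{itemize}
\item $F_S(C)$ keeps only vertices of $U_S\setminus C$ that are adjacent in $G_2$ to \emph{every} vertex of $C$, each with its list of edges into $C$ (Lemma~\ref{lem:frontier}). Any other vertex could only cause a failing merge.
\item A merge scans the smaller frontier $F_S(C_1)$ and sums the list lengths of keys lying in $C_2$. It succeeds if and only if the sum is $|C_1||C_2|$ (Lemma~\ref{lem:test}). This detects a missing crossing edge without looking up non-edges and hands over the raw edges to force.
\item The merged frontier is the intersection of the two key sets, so frontiers only shrink. A successful test gives $|C_2|\le|F_S(C_1)|\le|F_S(C_2)|$, so the discarded component's initial cells pay for scanning and relabeling.
\end{itemize}
Summed over worlds, this is $(r-2)(r-1)\,c_{r-1}(G_2)=O(\alpha^{r-3}m_2)$ (Lemma~\ref{lem:mergecost}). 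Together with the $(r-1)$-clique completion events, this is what closes the proof.
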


The least-solution property also follows from the earlier forced-edge closure \cite{dantas2011}; we establish that the sparse implementation retains it. For $r=4$, our sandwich bound matches the known $O(n+\alpha(G)m)$ bound for \emph{static} diamond-free recognition \cite{lin2012}, without implying optimality. An application to probe graph recognition is stated in Theorem~\ref{thm:probe}. Couto et al.~\cite{couto2018} also studied the nonpartitioned probe problem; we discuss their stated running time in Section~\ref{sec:probe}. The direct static recognizer in Section~\ref{sec:static} illustrates the local characterization without dynamic edge states. In particular, a specialized $O(n+nm)$ algorithm is available for nonpartitioned probe diamond-free recognition \cite{grippoLin2026}.

An initial membership test on either input graph gives a quick positive answer when $G_1$ or $G_2$ is already $K_r-e$-free; neither test is needed by the closure algorithm. Starting from $G_1$, it computes the least feasible extension, possibly $G_1$ itself. This distinction matters when $G_2=K_n$: membership of the complete upper graph is automatic, but the least extension is the information needed for the probe application. The standard restriction to the connected components of $G_1$ reduces the relevant upper graph (Proposition~\ref{prop:components}); for $r>4$ it can also reduce the number of potential clique-indexed worlds.

For additional examples and a more extensive exposition in Spanish, see Vekselman's undergraduate thesis draft~\cite{vekselmanThesis2026}. The present paper is self-contained and includes all proofs required for its results.

\section{The local characterization and forced-edge closure}\label{sec:local}
For a graph $J$ and a clique $S$, put $N_J(S)=\bigcap_{s\in S}N_J(s)$. A \emph{cluster graph} is a disjoint union of cliques, equivalently an induced-$P_3$-free graph. Let $k=r-3$ throughout. We write $\cl_t(J)$ for the set of $t$-cliques of $J$ and $c_t(J)=|\cl_t(J)|$.

\begin{lemma}[Local characterization]\label{lem:local}
A graph $J$ is induced-$K_r-e$-free if and only if $J[N_J(S)]$ is a cluster graph for every $k$-clique $S$ of $J$.
\end{lemma}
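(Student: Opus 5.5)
We prove both directions by matching an induced copy of $K_r-e$ with an induced $P_3$ inside a common neighborhood. The key observation is that $K_r-e$, for $r\ge4$, is a $k$-clique $S$ (with $k=r-3$) completely joined to an induced $P_3$ on the remaining three vertices. Indeed, label the vertices of $K_r-e$ so that the missing edge is $ac$. Choose any $k$ further vertices to form $S$, and let $b$ be the one remaining vertex. This leaves $r-2-k=1$ vertex for $b$, which is possible because $r-2\ge k+1$. Then $S$ is a clique, every vertex of $S$ is adjacent to $a$, $b$, $c$, and the triple $a,b,c$ induces the path $a-b-c$. Conversely, a $k$-clique completely joined to an induced $P_3$ induces exactly $K_r-e$: the only non-adjacent pair is the two ends of the $P_3$, and the vertex count is $k+3=r$.

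\emph{($\Leftarrow$, contrapositive).} Suppose some $k$-clique $S$ of $J$ has $J[N_J(S)]$ not a cluster graph. Then $N_J(S)$ contains vertices $a,b,c$ with $ab,bc\in E(J)$ and $ac\notin E(J)$. These vertices lie outside $S$, since $N_J(S)$ excludes the members of $S$ (no loops). Every vertex of $S$ is adjacent to each of $a,b,c$. Hence $S\cup\{a,b,c\}$ is a set of $r$ distinct vertices whose only non-adjacent pair is $ac$, so it induces $K_r-e$ in $J$.

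\emph{($\Rightarrow$, contrapositive).} Suppose $J$ has an induced $K_r-e$ with missing edge $ac$. Pick $S$ and $b$ as in the decomposition above. $S$ is a $k$-clique of $J$, and $a,b,c\in N_J(S)$. In $J$ we have $ab,bc\in E(J)$ and $ac\notin E(J)$. So $J[N_J(S)]$ contains an induced $P_3$ and is not a cluster graph.

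The only step that needs care is the counting that guarantees a middle vertex $b$ exists outside $S$. This uses $r\ge4$, or equivalently $k\ge1$, together with $r-2=k+1$. The case $k=1$ (diamonds) is consistent with this: $S$ is a single vertex, and the condition is that every vertex neighborhood induces a cluster graph. Since the cluster-graph condition is stated through induced $P_3$s, the argument is a direct two-way translation with no real obstacle.
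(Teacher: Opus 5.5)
Your proof is correct and is essentially the paper's argument: in both directions an induced $K_r-e$ with missing edge $ac$ corresponds to a $k$-clique $S$ joined to an induced path $a-b-c$ in $J[N_J(S)]$, where $b$ is the single vertex left after choosing $S$ among the $r-2$ vertices other than $a$ and $c$. One cosmetic slip: your two paragraphs carry swapped labels. The first proves the contrapositive of $(\Rightarrow)$ and the second that of $(\Leftarrow)$. This does not affect validity, since both implications are established.
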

\begin{proof}
If $J[N_J(S)]$ contains an induced path $a-b-c$, then $S\cup\{a,b,c\}$ induces $K_r-e$, with $ac$ as its missing edge. Conversely, if $J$ has an induced $K_r-e$ with nonadjacent vertices $a,c$, choose $k=r-3$ vertices among the remaining $r-2$ vertices as $S$, and denote the unique vertex left over by $b$. The vertices $a$, $b$, and $c$ are all adjacent to every vertex of $S$; moreover, $b$ is adjacent to both $a$ and $c$. Hence $a-b-c$ is an induced $P_3$ in $J[N_J(S)]$.
\end{proof}

For $r=4$, this is the familiar characterization of diamond-free graphs by neighborhoods that induce disjoint unions of cliques; see, for example, Grandoni~\cite{grandoni2004}. The lemma states the corresponding characterization for every $r\ge4$.

We record the following standard component restriction to make explicit which upper graph enters our complexity bounds.
\begin{proposition}[Restriction to input components]\label{prop:components}
Let $C_1,\ldots,C_q$ be the connected components of $G_1$, and let $G_2'$ have vertex set $V$ and edge set
\[
E_2'=\{uv\in E_2: u,v\in C_i\text{ for some }i\}.
\]
Then $(G_1,G_2)$ is feasible if and only if $(G_1,G_2')$ is feasible. When feasible, the two instances have the same least solution, which contains no edge between distinct $C_i$.
\end{proposition}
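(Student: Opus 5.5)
The argument has three parts: feasibility transfers between the two instances, edges of $E_2\setminus E_2'$ can always be deleted, and the least-solution statement follows from the first two. The core claim is a deletion property: if $H$ is a feasible sandwich for $(G_1,G_2)$, then $H'=H-(E_2\setminus E_2')$, which keeps only the edges of $H$ inside some $C_i$, is still induced-$K_r-e$-free. Since $E_1\subseteq E_2'$ (each edge of $G_1$ lies within a component of $G_1$), $H'$ is then a feasible sandwich for $(G_1,G_2')$. Conversely, any sandwich for $(G_1,G_2')$ is a sandwich for $(G_1,G_2)$ because $E_2'\subseteq E_2$. Feasibility is therefore equivalent.

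To prove the deletion property, observe that $H'$ is the disjoint union of the induced subgraphs $H[C_i]$; no edges join distinct $C_i$. I will use the fact that $K_r-e$ is connected (indeed $2$-connected for $r\ge4$). Hence any induced copy of $K_r-e$ in $H'$ lies inside a single $H'[C_i]$. But $H'[C_i]=H[C_i]$ is an induced subgraph of $H$, so such a copy would already be induced in $H$, a contradiction.

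Alternatively, one could argue via Lemma~\ref{lem:local}: for a $k$-clique $S$ of $H'$, $S\subseteq C_i$ for some $i$, since $k\ge1$ and cliques are connected. Then $N_{H'}(S)=N_H(S)\cap C_i$, and an induced subgraph of a cluster graph is a cluster graph. The direct connectivity argument is simpler, and I would use that.

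For the least-solution statement, the plan relies on the paper's notion of least solution, i.e.\ a feasible sandwich contained in every feasible sandwich, which is unique when it exists.

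Let $H_{\min}$ be the least solution of $(G_1,G_2)$. By the deletion property, $H_{\min}'$ is feasible for $(G_1,G_2)$, so $H_{\min}\subseteq H_{\min}'\subseteq H_{\min}$. Hence $H_{\min}=H_{\min}'$, which has no edges between distinct $C_i$ and is feasible for $(G_1,G_2')$.

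Every feasible sandwich of $(G_1,G_2')$ is feasible for $(G_1,G_2)$ and therefore contains $H_{\min}$. So $H_{\min}$ is also least for $(G_1,G_2')$. Conversely, if $(G_1,G_2')$ has a least solution $L$, then for any feasible $H$ of $(G_1,G_2)$ we have $L\subseteq H'\subseteq H$, so $L$ is least for the original instance as well. By uniqueness, the two least solutions coincide.

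The main obstacle is only to make sure the existence of a least solution is either assumed, or proved elsewhere by the closure of Dantas et al. The argument above shows that a least solution exists for one instance if and only if it exists for the other, and that they then coincide, so it suffices in either reading.
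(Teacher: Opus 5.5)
Your proof is correct and follows the paper's argument. Restricting any feasible sandwich to the components of $G_1$ keeps it $K_r-e$-free because $K_r-e$ is connected, and leastness transfers between the two instances because every feasible sandwich contains its own restriction, with existence of the least solution supplied, as in the paper, by the forced-edge closure (Theorem~\ref{thm:correct}). The only differences are cosmetic: you start from the least solution of the original instance rather than the restricted one, and you spell out both directions.
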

\begin{proof}
Every sandwich for $(G_1,G_2')$ is also a sandwich for $(G_1,G_2)$. Conversely, given a feasible sandwich $H$ for $(G_1,G_2)$, retain its edges within each $C_i$ and delete all edges between different $C_i$. Each resulting subgraph on $C_i$ is induced in $H$ and hence $K_r-e$-free. Their disjoint union remains $K_r-e$-free because $K_r-e$ is connected. It lies between $G_1$ and $G_2'$, proving the reverse implication. The least solution for $(G_1,G_2')$ is contained in every feasible solution of $(G_1,G_2)$: restrict such a solution first, then apply leastness in the restricted instance (the least-solution property is established below in Theorem~\ref{thm:correct}). It is itself feasible for the original instance, so the two least solutions coincide.
\end{proof}

The same standard restriction works for any target class that is both hereditary and closed under disjoint union, such as chordal graphs and interval graphs. Computing the components and filtering an explicitly listed $E_2$ takes $O(n+m_2)$ time. If $m_2'=|E_2'|$ and $\alpha'=\alpha(G_2')$, applying Theorem~\ref{thm:main} after filtering gives $O(n+m_2+(\alpha')^{r-3}m_2')$ time. This bound retains the cost of reading the original edge list. For an implicitly specified complete upper graph, the restricted graph is simply a disjoint union of complete graphs on the $C_i$.

The characterization extends directly to a forcing rule. In any $K_r-e$-free completion containing a current graph $H$, if two vertices belong to the same connected component of $H[N_H(S)]$, they must be adjacent. Thus any missing edge inside such a component is mandatory. If a mandatory edge lies outside $E_2$, the instance is infeasible. Starting with $H=G_1$ and repeatedly adding mandatory edges defines a monotone closure. The result, when feasible, is the least feasible sandwich by edge inclusion: by induction every added edge belongs to every feasible completion; at a fixed point Lemma~\ref{lem:local} guarantees feasibility. This observation is the logical core of the closure in \cite{dantas2011}. What remains is to execute it without examining all nonedges or rebuilding common neighborhoods.

We use four conceptual edge states: \emph{forbidden} for $E(K_n)\setminus E_2$, \emph{raw} for $E_2\setminus E_1$ not yet forced, \emph{pending} for forced edges awaiting processing (including $E_1$ initially), and \emph{done} for processed edges. The current graph $H$ consists of pending and done edges. A pending edge is already an edge of $H$, although its effect on the local data structures has yet to be propagated. No edge leaves $H$; an edge is enqueued at most once.

For each potential base $S\in\cl_k(G_2)$, we call the local data structure associated with $S$ its \emph{world}. Its fixed universe is $U_S=N_{G_2}(S)$. A world becomes active when $S$ is a clique of done edges; it then maintains components on the \emph{active vertices}, those joined by done edges to every member of $S$. These are the local counterparts of the components of $H[N_H(S)]$ used by the closure, with pending edges reflected only after processing. Potential vertices can have records in $U_S$ before they become active; keeping them separate from active vertices is what allows the frontier and component structures to be allocated sparsely over $U_S$ rather than over all $n$ vertices.

\section{Filtered frontiers and component merging}\label{sec:frontier}
Fix a potential base $S\in\cl_k(G_2)$. Initially reserve one singleton record for each $x\in U_S$. As events arrive, some records become active. For every active component $C$ of the graph on active vertices, maintain a filtered frontier $F_S(C)$ of keys $z\in U_S\setminus C$. A key $z$ occurs precisely when $z$ is adjacent in $G_2$ to \emph{every} vertex of $C$; its associated linked list $L_C(z)$ holds exactly the distinct edges $xz$ for $x\in C$. Each singleton's frontier is created from the $(k+2)$-cliques $S\cup\{x,z\}$ of $G_2$. A potential singleton can be initialized before its activation.

\begin{lemma}[Exact frontier invariant]\label{lem:frontier}
For each active component $C$ and each $z\in U_S\setminus C$,
\[
z\in F_S(C)\quad\Longleftrightarrow\quad xz\in E_2\ \text{for every }x\in C.
\]
If $z\in F_S(C)$, the list $L_C(z)$ consists of exactly the $|C|$ edges $xz$, one for each $x\in C$.
\end{lemma}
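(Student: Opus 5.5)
This is an invariant maintained by the construction, so I will prove it by induction over the sequence of operations that change the component structure of a world $W_S$. There are only two: creating or initializing a singleton record, and merging two active components when a processed edge joins them. Between such operations the components and their frontiers do not change, so the invariant is preserved trivially. The statement involves only $G_2$, which is fixed, and the vertex set of $C$. Hence it suffices to show that both operations produce frontiers satisfying it, assuming their inputs did.

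\emph{Base case (singletons).} For $C=\{x\}$ with $x\in U_S$, the frontier is built from the $(k+2)$-cliques $S\cup\{x,z\}$ of $G_2$. Since $x\in U_S$, the set $S\cup\{x\}$ is already a clique. So $S\cup\{x,z\}$ is a clique exactly when $z\in N_{G_2}(S)=U_S$ and $xz\in E_2$. Therefore the keys created are exactly the $z\in U_S\setminus\{x\}$ with $xz\in E_2$, and each list $L_{\{x\}}(z)=(xz)$ has one entry. I also need each such clique to contribute $z$ only once. This holds because a clique determines $z$ as its unique element outside $S\cup\{x\}$, given $x$ and $S$.

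\emph{Inductive step (merge).} Let $C=A\cup B$ with $A,B$ disjoint active components. I will take the merge rule to be as follows. A key $z\in U_S\setminus C$ survives exactly when it lies in both $F_S(A)$ and $F_S(B)$, and then $L_C(z)$ is the concatenation of $L_A(z)$ and $L_B(z)$. Keys $z\in A\cup B$ are dropped.

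For $z\notin C$, the condition "$xz\in E_2$ for all $x\in C$" is the conjunction of the same condition over $A$ and over $B$. By the inductive hypothesis this conjunction is exactly $z\in F_S(A)\cap F_S(B)$. The lists $L_A(z)$ and $L_B(z)$ are distinct-edge lists of sizes $|A|$ and $|B|$ over disjoint endpoint sets, so their concatenation has exactly $|C|$ distinct edges $xz$, one for each $x\in C$. Dropping keys inside $C$ enforces $z\in U_S\setminus C$. Since the conjunction is symmetric, it does not matter which side is iterated when the implementation filters the smaller frontier against the larger one.

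\emph{Main obstacle.} The delicate point is the interaction with activation, and I would settle it first. Records in $U_S$ that are potential but not yet active still appear as keys. The statement quantifies over all $z\in U_S\setminus C$, not only active ones, so activating a vertex changes neither $C$ nor the right-hand side, and no frontier needs updating. I would also check that an implementation which filters lazily, removing stale keys only when they are touched, still satisfies the exact (not merely superset) invariant. If it did not, the lemma would have to be read as holding after the merge procedure completes.
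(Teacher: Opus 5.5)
Your proof is correct and follows the paper's argument: induction with the singleton frontier as base case and, at a merge, intersection of the key sets together with splicing of the lists, which have disjoint endpoint sets. One difference is that you impose the dropping of keys in $C=A\cup B$ as a separate rule, whereas the paper notes that this happens automatically: a vertex of $A$ is never a key of $F_S(A)$, and likewise for $B$, so no vertex of the union lies in both key sets. Your concern about lazy filtering also does not arise, because the update phase discards unmatched entries eagerly.
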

\begin{proof}
For $C=\{x\}$ this is the definition of a singleton frontier. Distinct components $C_1,C_2$ of the same world are disjoint. On merging them, retain exactly the keys in $F_S(C_1)\cap F_S(C_2)$ and splice their lists. An external vertex is adjacent to all of $C_1\cup C_2$ exactly when it is a key on both sides, and the spliced lists contain one edge for each vertex of their disjoint union. No vertex newly inside the union belongs to both lists of keys. Induction completes the proof.
\end{proof}

To merge two components, let $C_1$ have no more frontier keys than $C_2$ and retain $C_1$ as the representative. The procedure has two phases.
\begin{enumerate}
\item \emph{Test and force.} Scan the keys $z$ of $F_S(C_1)$, marking each in a scratch array with a pointer to its entry. Whenever the component map assigns $z$ the identifier of $C_2$, add $|L_{C_1}(z)|$ to a counter and force the raw edges in that list. No vertex of $C_2$ is enumerated directly. If the counter is not $|C_1||C_2|$, report infeasibility.
\item \emph{Update.} Scan $F_S(C_2)$ and, for each key also marked in $F_S(C_1)$, splice the two lists and retain that key. Discard unmatched entries on both sides, so the new frontier has exactly the intersection of the two key sets. Clear the scratch locations, relabel the vertices of $C_2$ with the surviving identifier by traversing its member list, append that list to the member list of $C_1$ in $O(1)$ time (member lists keep head and tail pointers), and update the component size.
\end{enumerate}
Forcing raw edges before the count is checked is harmless because failure ends the algorithm; the writes may instead be deferred. The update intersects the frontier key sets as required by Lemma~\ref{lem:frontier}. Frontiers are iterable keyed records, and the scratch array indexes local vertex positions. Edge states reside in the records of the edges of $G_2$ and are accessed through the frontier lists; the merge counter detects a missing crossing edge without looking up nonedges. No $n\times n$ adjacency matrix or dense array per world is required.

\begin{lemma}[Merge test]\label{lem:test}
Every active component is a clique in $G_2$. The counter in the preceding merge equals $|C_1||C_2|$ if and only if every crossing edge between $C_1$ and $C_2$ lies in $E_2$, equivalently if and only if $C_1\cup C_2$ is a clique in $G_2$. If equality holds, all $|C_1||C_2|$ crossing edges are available in the scanned lists.
\end{lemma}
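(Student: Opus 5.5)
We prove the three assertions together, by induction on the number of merges performed in the world of $S$, and using the exact frontier invariant of Lemma~\ref{lem:frontier} as the main tool.

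\emph{First claim (components are $G_2$-cliques).} The base case is a singleton $\{x\}$ with $x\in U_S$, which is trivially a clique in $G_2$. For the inductive step, a merge of $C_1$ and $C_2$ either ends the algorithm because the counter check fails, or else (by the second claim, proved below for that same merge) produces a union that is a clique in $G_2$. In either case, every active component that survives is a clique in $G_2$.

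\emph{Counter characterization.} Fix a merge of components $C_1$ and $C_2$; by induction both are cliques in $G_2$, and they are disjoint. The counter adds $|L_{C_1}(z)|$ exactly for those keys $z\in F_S(C_1)$ that the component map assigns to $C_2$. By Lemma~\ref{lem:frontier}:
\begin{itemize}
\item such a key $z\in C_2$ is adjacent in $G_2$ to every vertex of $C_1$;
\item its list $L_{C_1}(z)$ has exactly $|C_1|$ entries.
\end{itemize}
Hence the counter equals $|C_1|\cdot|Z|$, where
\[
Z=\{z\in C_2: xz\in E_2\ \text{for all }x\in C_1\}.
\]
Here we use $C_2\subseteq U_S\setminus C_1$, so every $z\in C_2$ is eligible as a key of $F_S(C_1)$, and membership in $F_S(C_1)$ is exactly the condition defining $Z$.

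\emph{Equivalences.} The counter equals $|C_1||C_2|$ if and only if $Z=C_2$, i.e., if and only if every crossing pair $xz$ with $x\in C_1$, $z\in C_2$ lies in $E_2$. Since $C_1$ and $C_2$ are already cliques in $G_2$, this holds if and only if $C_1\cup C_2$ is a clique in $G_2$.

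\emph{Availability of crossing edges.} When equality holds, the scanned lists $L_{C_1}(z)$ for $z\in C_2$ together contain, by Lemma~\ref{lem:frontier}, each edge $xz$ with $x\in C_1$ exactly once. These are all $|C_1||C_2|$ crossing edges, so all of them are available in the scanned lists.

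The main subtlety is the bookkeeping of the component map. We must check that the identifier lookup for $z$ reflects current membership in $C_2$, since relabeling occurs only in the update phase after the test, and that an inactive vertex of $U_S$ never carries $C_2$'s identifier. I would handle this with a short invariant: the map assigns a component identifier only to active vertices, and it agrees with the member lists between merges.
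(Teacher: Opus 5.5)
Your proposal is correct and follows essentially the same route as the paper. Both arguments use Lemma~\ref{lem:frontier} to show that each vertex of $C_2$ occurring as a key of $F_S(C_1)$ contributes exactly $|C_1|$, so equality holds exactly when every vertex of $C_2$ occurs, and both obtain cliqueness of active components by induction, since only successful merges complete. Your explicit invariant that the component map reflects current membership during the test phase is a detail the paper leaves implicit.
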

\begin{proof}
By Lemma~\ref{lem:frontier}, a key $z\in C_2$ occurs in $F_S(C_1)$ exactly when $z$ is adjacent in $G_2$ to all of $C_1$. Each such key contributes exactly $|C_1|$. Equality holds precisely when every vertex of $C_2$ occurs; the lists then supply every crossing edge. Singletons are cliques, and a merge completes only when this equality holds, so by induction every active component is a clique in $G_2$ and the two formulations of the test coincide.
\end{proof}

For the cost analysis let $f_S(x)=|F_S(\{x\})|$, measured at initial construction, and put $B_S=\sum_{x\in U_S}f_S(x)$. A surviving frontier only loses keys (never gains new ones), so if $x$ represents a component $C$, then $|F_S(C)|\le f_S(x)$. Every successful merge scans $O(|F_S(C_1)|+|F_S(C_2)|)$ keys; since the smaller frontier survives, this is $O(|F_S(C_2)|)$. Charge that work to the representative of $C_2$, who is discarded permanently. The same charge pays for relabeling: by Lemma~\ref{lem:test}, every vertex of $C_2$ was a key of $F_S(C_1)$, whence
\[
|C_2|\le|F_S(C_1)|\le|F_S(C_2)|.
\]
Consequently, all key scans and relabelings in this world cost $O(B_S)$ in total. The frontier lists contain $B_S$ edge cells at construction. Splicing moves cells between lists without copying them. When a cell is examined as a crossing edge in a successful merge, its external endpoint becomes internal to the merged component. Its key is omitted from the new frontier, so that cell cannot appear in any later frontier or be examined again. A cell removed with a discarded key is likewise discarded only once. Hence scanning or deleting cells in all successful merges costs $O(B_S)$. A single failing merge may scan surviving cells once more, but there is at most one failing merge in the entire algorithm and its cost is at most $O(B_S)$ for its world. Already-joined endpoints cause an $O(1)$ return and are charged to their unique triggering clique incidence.

\begin{lemma}[Total merge cost]\label{lem:mergecost}
Across all worlds, component merges, including a possible final failed merge, cost $O(c_{k+2}(G_2))$ time and require $O(c_{k+2}(G_2))$ frontier storage.
\end{lemma}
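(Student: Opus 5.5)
The plan is to sum the per-world bound already established in the preceding discussion over all potential bases $S\in\cl_k(G_2)$. Then I will show that $\sum_S B_S = O(c_{k+2}(G_2))$ by a double-counting argument.

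\textbf{Identifying $B_S$.} By definition, $f_S(x)$ is the number of keys $z\in U_S\setminus\{x\}$ with $xz\in E_2$. Since $x,z\in U_S=N_{G_2}(S)$, each such pair corresponds to the $(k+2)$-clique $S\cup\{x,z\}$ of $G_2$; this is exactly how singleton frontiers are built. Each singleton list $L_{\{x\}}(z)$ holds the single cell $xz$, so $B_S$ counts both the keys and the edge cells of the world. Thus
\[
B_S=\#\{(x,z): x\neq z,\ S\cup\{x,z\}\in\cl_{k+2}(G_2)\}.
\]

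\textbf{Double counting.} A fixed $(k+2)$-clique $Q$ arises from a pair $(S,\{x,z\})$ only by choosing the $k$-subset $S\subseteq Q$ and ordering the remaining two vertices. This gives at most $2\binom{k+2}{k}=(k+2)(k+1)$ occurrences, a constant because $r$ is fixed. Hence
\[
\sum_S B_S\le (k+2)(k+1)\,c_{k+2}(G_2).
\]

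\textbf{Per-world costs.} The preceding charging argument shows that all successful merges in a world cost $O(B_S)$, covering key scans, relabelings, and cell scans or deletions. Any single failing merge costs $O(B_S)$ in its own world, and there is at most one failing merge overall, so it adds at most $O(\max_S B_S)\le O(\sum_S B_S)$. Calls on already-joined endpoints cost $O(1)$ each and are charged to distinct triggering clique incidences. I will check that these incidences are also $(k+2)$-clique incidences, namely the event $(S,xz)$ with $S\cup\{x,z\}$ a clique, so they too are counted in the sum above.

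\textbf{Storage.} The frontier storage of a world is at most its initial $B_S$ keys and cells plus $O(1)$ per record, since splicing never copies and merges only discard entries. Scratch arrays index local positions in $U_S$, and $|U_S|$ is bounded by $1+f_S(x)$ summed suitably, or else is charged per vertex record. One subtlety remains: worlds with $U_S$ nonempty but $B_S=0$. Their records cost $|U_S|$, which counts $(k+1)$-cliques rather than $(k+2)$-cliques. I expect this to be the main obstacle in stating the bound exactly as $O(c_{k+2})$. My first attempt will be to allocate vertex records only for $x$ with $f_S(x)>0$ (isolated potential vertices never participate in merges), so that record storage is also bounded by $\sum_S B_S$. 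Any residual $O(c_{k+1})$ cost is then deferred to the initialization accounting handled elsewhere.
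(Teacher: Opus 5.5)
Your proposal is correct and follows essentially the paper's argument: you sum the per-world $O(B_S)$ charging bound and double count each $(k+2)$-clique $Q$ as an ordered pair $(x,z)$ over a $k$-base $S\subset Q$, which gives $2\binom{k+2}{2}=(k+1)(k+2)$ occurrences per clique (the paper states this count as an equality), and you absorb the single failed merge into its own world's $B_S$. The record-storage subtlety you raise does not affect this lemma, because ``frontier storage'' here means only keys and cells; the paper bounds the length-$|U_S|$ component maps and scratch arrays separately in Section~\ref{sec:complexity} by $(k+1)c_{k+1}(G_2)$, which is exactly the accounting your fallback uses.
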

\begin{proof}
The preceding charging argument gives $O(B_S)$ per world. Each $(k+2)$-clique $Q$ contributes two directed singleton cells for each choice of $k$-element base $S\subset Q$. Thus
\[
\sum_{S\in\cl_k(G_2)}B_S
=2\binom{k+2}{2}c_{k+2}(G_2)
=(k+1)(k+2)c_{k+2}(G_2).
\]
The multiplier is constant for fixed $r$. The one unsuccessful merge is covered by its world's initial cells.
\end{proof}

\section{Clique events and the sandwich algorithm}\label{sec:events}
We now specify the events that make a world active and trigger its merges. Enumerate all cliques of $G_2$ of orders $1$ through $r-1=k+2$. For each clique $Q$ of order at least three, record its $|Q|$ immediate subcliques $Q\setminus\{x\}$ and the reverse incidence lists. Vertices start \emph{complete}; an edge becomes complete when it becomes done. A higher-order clique becomes complete upon receiving completion notifications from all of its immediate subcliques. Each node receives at most one notification per predecessor and fires once. After an edge has become done and its entire event cascade has returned, a node is complete exactly when all its edges are done: this follows by induction on its size, since every pair belongs to an immediate subclique of each clique of order at least three.

On completion of a $k$-clique $S$, activate its world (for $k=1$, vertices are complete initially and every vertex world is activated at initialization). On completion of a $(k+1)$-clique $S\cup\{x\}$, activate $x$ as a singleton in the world of each $k$-subclique $S$. On completion of a $(k+2)$-clique $Q=S\cup\{x,y\}$, call the merge procedure on $x,y$ in the world of each $k$-subclique $S$ of $Q$. An event performs its local action \emph{before} notifying successor cliques. Thus the world of $S$ is active before its first vertex event, and both $x$ and $y$ are active before their merge event; the edge $xy$ is done by that event. When an edge is popped from the pending queue, mark it done and fire its completion event: first its own local action (activating each endpoint in the world of the other if $k=1$, activating the world indexed by the edge if $k=2$, and performing no immediate local action if $k>2$), then notifications through the clique incidence lists. Newly forced raw edges enter the pending queue exactly once. The event ordering also applies when several cliques complete in the same cascade.

\begin{remark}[The diamond case]
For $r=4$, $k=1$: worlds are indexed by vertices, $(k+1)$-clique events are edges and activate neighbors, and $(k+2)$-clique events are triangles and trigger component merges.
\end{remark}

We call the state immediately before removing a pending edge, or immediately after the completion cascade for a newly done edge returns, a \emph{synchronization point}. Events may still be outstanding inside a cascade, so the full correspondence between worlds and the graph of done edges is asserted only at these points.

\begin{lemma}[World invariant]\label{lem:world}
At each synchronization point, each active world $S$ has exactly the vertices $A_S=N_D(S)$, where $D$ is the graph of done edges. Its maintained components are exactly the connected components of $D[N_D(S)]$.
\end{lemma}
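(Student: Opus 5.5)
We argue by induction over the sequence of synchronization points, i.e., over the number of edges that have been marked done. The first step is the clique-completion invariant from Section~\ref{sec:events}: at a synchronization point, a clique of $G_2$ of order at most $k+2$ is complete exactly when all its edges lie in $D$. With this, a world $S$ is active exactly when $S$ is a clique of $D$. We then check that $x$ has been activated in world $S$ exactly when $S\cup\{x\}$ is complete, i.e. $x\in N_D(S)$. Here $x\in U_S$ is automatic since done edges lie in $E_2$. For $k=1$ the activation comes from the edge event itself rather than a $2$-clique notification, but it is the same condition. We also check that a merge of $x,y$ has been called in world $S$ exactly when $S\cup\{x,y\}$ is complete, i.e. $x,y\in N_D(S)$ and $xy\in D$.

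The base case is the initial state with $D=\emptyset$. No world is active for $k\ge2$. For $k=1$ every vertex world is active with no active vertices, matching $N_\emptyset(\{v\})=\emptyset$.

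For the inductive step, suppose the invariant holds before edge $uv$ is popped, and let $D'=D+uv$. By the event ordering, every action (world activation, vertex activation, merge) happens only after its prerequisites are in place. The world is active before any vertex event in it, and both endpoints are active before their merge. So each action is well defined when it occurs. By the completion characterization, the set of newly completed cliques is exactly the set of cliques containing $uv$ whose other edges are all in $D$. Hence the active vertex set of $S$ after the cascade is $N_{D'}(S)$. We use that each clique fires once and each notification arrives at most once, so no vertex is activated twice.

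For components, note that $D'[N_{D'}(S)]$ is obtained from $D[N_D(S)]$ in one of two ways. If $uv\cap S=\emptyset$, we add $uv$ as an edge when $u,v\in N_D(S)$. If $u\in S$, we add the vertex $v$ together with its $D'$-edges to $N_{D'}(S)$. The latter case includes newly activated worlds with $S\ni u,v$, which start from all of $N_{D'}(S)$ via their vertex events. Every edge $xy$ of $D'[N_{D'}(S)]$ corresponds to a completed $(k+2)$-clique $S\cup\{x,y\}$, so a merge on $x,y$ has been called. Conversely, every merge call corresponds to such an edge. By the merge procedure and Lemma~\ref{lem:test}, a successful merge unions the two components, and an already-joined pair returns without change. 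Thus the maintained partition is the union-closure of the called pairs, which is exactly the connected-component partition of $D'[N_{D'}(S)]$.

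The main obstacle is the bookkeeping that matches merge calls bijectively with edges of $D[N_D(S)]$ across worlds activated at different times. In particular, edges of $D[N_D(S)]$ that became done before $S$ itself was complete must still trigger merges. I would handle this through the completion characterization. A $(k+2)$-clique $S\cup\{x,y\}$ completes only after all its edges are done, including those of $S$, so it fires exactly once, at the moment its last edge is done. By the local-before-notify ordering, this happens after the world and the vertices $x,y$ are active. A failed merge aborts the algorithm, so it never produces a synchronization point and need not be covered.
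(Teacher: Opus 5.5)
Your proposal is correct and follows essentially the paper's argument. Both proofs identify vertex activations in world $S$ with completions of the $(k+1)$-cliques $S\cup\{x\}$, and merge calls with completions of the $(k+2)$-cliques $S\cup\{x,y\}$, that is, with the edges of $D[N_D(S)]$; the maintained partition is then the union-closure of these local edges, which is the component partition. Your explicit induction over synchronization points and the case split on how $D'[N_{D'}(S)]$ arises from $D[N_D(S)]$ are harmless but unnecessary, since your final global correspondence does not use them. The paper instead argues the correspondence directly from the completion characterization, and also records the weaker during-cascade containment that Theorem~\ref{thm:correct} later uses.
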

\begin{proof}
The completion event for $S\cup\{x\}$ occurs precisely after every edge from $x$ to $S$ is done, and inserts $x$ once. For active vertices $x,y$, their local edge is in $D$ exactly when $S\cup\{x,y\}$ is a clique of done edges. In that case its $(k+2)$-clique event has merged their maintained components. Conversely, each merge event certifies a done edge $xy$ between active vertices. These statements imply both the vertex equality and equality of the partitions into connected components. Furthermore, during an unfinished cascade, each maintained component is contained in a connected component of the current done-edge graph on the vertices activated so far: every merger processed thus far was triggered by a done local edge.
\end{proof}

\begin{theorem}[Correctness and least solution]\label{thm:correct}
The algorithm reports infeasibility exactly when no $K_r-e$-free sandwich exists. On a feasible input it returns the unique least feasible sandwich $H_{\min}$ under edge inclusion; different processing orders return the same graph.
\end{theorem}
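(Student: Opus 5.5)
The plan is to prove two invariants separately: \emph{soundness}, meaning every edge the algorithm ever forces lies in every feasible sandwich, and \emph{completeness at termination}, meaning that if the pending queue empties without a failure report, the final graph $H$ is $K_r-e$-free. Together these give the theorem. A reported failure certifies infeasibility. On success $H$ is feasible and contained in every feasible sandwich, so it is the unique least one, $H_{\min}$. Uniqueness of a least element then makes the output independent of processing order.

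\textbf{Soundness.} I will argue by induction over the sequence of forcing steps that every edge of $H$ (pending or done) belongs to every feasible sandwich $H'$. The base case is $E_1\subseteq E_{H'}$. For the step, consider a merge in the world of $S$. By the remark at the end of the proof of Lemma~\ref{lem:world}, each maintained component is contained in a connected component of $D[N_D(S)]$ even mid-cascade, where $D$ is the current done-edge graph. The induction hypothesis gives $D\subseteq H'$ and that $S$ is a clique of $H'$. Hence $C_1\cup C_2$ lies inside a single connected component of $H'[N_{H'}(S)]$, because the merge event certifies a done edge $xy$ joining them. By Lemma~\ref{lem:local}, $H'[N_{H'}(S)]$ is a cluster graph, so its components are cliques. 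Therefore every crossing edge between $C_1$ and $C_2$ is in $H'$, and in particular every edge forced in phase one is. For failure: if the counter differs from $|C_1||C_2|$, Lemma~\ref{lem:test} gives a crossing pair outside $E_2$. That pair would nonetheless have to be an edge of every feasible $H'$, so no feasible $H'$ exists.

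\textbf{Completeness.} Suppose the queue empties with no failure. Then no edge is pending, so $H=D$, and the point is a synchronization point. By the completion-event bookkeeping, every $k$-clique of $D$ is complete, so its world is active. Lemma~\ref{lem:world} identifies the maintained components with the components of $D[N_D(S)]$. Lemma~\ref{lem:test} shows each successful merge forced all crossing edges of the two merged sets, and those forced edges have since become done. By induction on the merge history, each maintained component is then a clique of $D$. So every $D[N_D(S)]$ is a cluster graph, and Lemma~\ref{lem:local} shows $D$ is $K_r-e$-free. Since $D\subseteq E_2$ by construction, $D$ is a feasible sandwich.

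\textbf{Expected obstacle.} The delicate step is the mid-cascade reasoning in the soundness part. Merges occur before the local graph is synchronized, so I must check that the component containment used there holds at every event, not only at synchronization points. This is exactly the final clause of Lemma~\ref{lem:world}, and I will rely on it. A second check is needed for completeness: a component formed by merging is a clique of $D$ only after the edges forced during the merge are processed. This holds because those edges were enqueued and are done at termination.
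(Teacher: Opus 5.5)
Your proposal is correct and follows essentially the same route as the paper. It argues soundness by induction against an arbitrary feasible sandwich, using the mid-cascade clause of Lemma~\ref{lem:world} and Lemma~\ref{lem:local} to show that every crossing edge of a merge is mandatory, so a failed test certifies infeasibility; it then shows the final done-edge graph is feasible at termination, with leastness yielding order independence. The one point the paper states that you leave implicit is termination (each edge is enqueued and becomes done at most once, and each clique node fires once). It is needed for the ``no sandwich $\Rightarrow$ failure is reported'' direction, but it is immediate.
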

\begin{proof}
Fix any feasible sandwich $H^*$ and inductively assume every pending or done edge belongs to $H^*$. At a merge event, the base $S$ is a clique of done edges and its active vertices belong to $N_{H^*}(S)$. Each already maintained component is connected through done edges, even during a cascade. The merge joins components across a done edge $xy$, so their union lies within one component of $H^*[N_{H^*}(S)]$. By Lemma~\ref{lem:local}, that component is a clique. Hence every crossing edge is in $H^*$ and thus in $E_2$; the merge test succeeds and each newly forced edge belongs to $H^*$. This establishes the induction, including during cascades. A failed merge therefore proves infeasibility. The algorithm terminates: each edge enters the pending queue at most once and becomes done at most once, and each clique node fires at most once.

If the algorithm terminates successfully, all pending edges have become done. Every base clique of its final graph has a complete node and an active world. By Lemma~\ref{lem:world}, its maintained components are those of the induced common neighborhood. Initially each maintained component is a singleton. Whenever two components merge, the successful test forces every edge between them; when the queue becomes empty, all these edges are done. By induction over the successful merges, every maintained component therefore induces a clique of done edges, including pairs brought together by earlier merges. Lemma~\ref{lem:local} implies the final graph is $K_r-e$-free; it contains $E_1$ and is a subgraph of $G_2$. The induction above proves its edges occur in every feasible $H^*$, establishing leastness. Applying this containment to the outputs of two successful processing orders in both directions proves equality.
\end{proof}

\section{Construction and complexity}\label{sec:complexity}
Clique nodes must have unique identities: the same clique can be encountered through different immediate predecessors. Represent each $t$-clique by an increasing $t$-tuple of vertex identifiers. For fixed $r$, radix-sort these tuples at each level, assign consecutive identifiers, and produce the $t$ immediate-subclique lookup requests of each $(t+1)$-clique. Radix-sort the requests and merge them against the sorted canonical $t$-clique tuples to resolve every forward and backward pointer. Counting sort of each coordinate takes $O(n+c_t(G_2)+c_{t+1}(G_2))$ time; there are only $O(r)$ levels and coordinates. At level $k$, the requests from $(k+1)$-cliques associated with the same base $S$ are consecutive. Number them locally to give every pair $(S,x)$ its position in $U_S$. For every $(k+2)$-clique $Q$ and each of its constant number of decompositions $Q=S\cup\{x,y\}$, form lookup requests for $(S,x)$ and $(S,y)$; sort these requests with the local-position records and store the two resolved positions in the incidence record of $Q$. When $Q$ fires its merge event, those positions directly index the component entries for $x$ and $y$ in the world of $S$ in $O(1)$ time. Sparse arrays of length $|U_S|$, not length $n$, serve as component maps and merge scratch space. The sum of these lengths is $(k+1)c_{k+1}(G_2)$; for $k=1$ it is $2m_2$. The same decompositions build the singleton frontiers: $Q=S\cup\{x,z\}$ inserts a cell for key $z$ into $F_S(\{x\})$ and a symmetric cell for key $x$ into $F_S(\{z\})$. Both cells hold a direct reference to the canonical record of the edge $xz$ of $G_2$, obtained by sorting the pair requests $\{x,z\}$ and merging them against the sorted canonical $2$-clique tuples in $O(n+m_2+c_{k+2}(G_2))$ time. Edge-state queries and updates through a frontier cell therefore take $O(1)$ time without hashing or an adjacency matrix. Initialization of the global vertex and edge tables takes $O(n+m_2)$. We work in the standard word-RAM model with word size $w=\Theta(\log n)$ bits, where the hidden constant may depend on the fixed value of $r$, and with vertex identifiers in $\{0,\ldots,n-1\}$. Indeed, the algorithm uses $M=O(n^{r-1})$ memory cells, and hence $\log M=O(\log n)$; the word size can therefore be chosen large enough to address all of them. Thus all clique identifiers, array indices, and pointers fit in one word; arithmetic, indexing, and pointer operations take $O(1)$ time, as required by the counting-sort construction above.

After clique enumeration, each node and immediate incidence is processed $O(1)$ times (for fixed $r$); each frontier cell and all merge operations together cost $O(c_{k+2}(G_2))$ by Lemma~\ref{lem:mergecost}. Thus the \emph{post-enumeration} time and storage are
\begin{equation}\label{eq:output}
O\left(n+\sum_{t=2}^{r-1}c_t(G_2)\right).
\end{equation}
The clique-listing algorithm of Chiba and Nishizeki \cite{chiba1985} enumerates $t$-cliques, for each fixed $t\ge3$, in $O(n+\alpha(G_2)^{t-2}m_2)$ time, and the number of such cliques obeys the same bound without the $n$ term. List edges directly for $t=2$. Since $r$ is fixed and $\alpha\ge1$ when $m_2>0$, enumerating all required levels and applying~\eqref{eq:output} proves Theorem~\ref{thm:main}. The $O(n)$ term covers isolated vertices and empty graphs. Explicitly storing the clique incidences and frontier cells yields the stated space bound. For $r=4$, $c_3(G_2)=O(\alpha m_2)$, and each triangle contributes exactly six initial frontier cells.

For every fixed $r\ge4$, bounded arboricity of $G_2$ gives $O(n+m_2)$ time. The next observation explains when sparsity makes even the closure unnecessary.

Put $a=\alpha(G_2)\ge2$. Any $s$-vertex subgraph of $G_2$ has at most $a(s-1)$ edges, so $\omega(G_2)\le2a$. The edge count of the obstruction sharpens this clique bound: $K_{2a+1}-e$ has $a(2a)+a-1>a(2a)$ edges and therefore cannot occur even as a subgraph of $G_2$. For every $r\ge2a+1$, $K_r-e$ contains $K_{2a+1}-e$ on a subset of its vertices; thus neither $G_2$ nor $G_1\subseteq G_2$ can contain an induced $K_r-e$. Consequently $G_1$ is the least solution. This threshold is tight: the instance $(G_1,G_2)=(K_{2a}-e,K_{2a})$ with $r=2a$ requires completing the missing edge, and $\alpha(K_{2a})=a$. If $a=1$, the same triviality holds for every $r\ge4$ because forests contain no triangle. Planarity yields a sharper conclusion than $a\le3$ alone: for $r\ge6$ the instance is always solved by $G_1$, while for $r=5$ an induced $K_5-e$ in $G_1$ would require its missing edge to belong to $E_2$, so $G_2$ would contain $K_5$, contradicting planarity. Therefore in that case feasibility is equivalent to $G_1$ being $K_5-e$-free. For $r=4$, a planar upper graph may still permit nontrivial completion.

\section{Direct static recognition}\label{sec:static}
The fixed-input version needs no pending queue, incomplete world or frontier. Given a single graph $G$, enumerate its cliques of sizes $k,k+1,k+2$. For each $(k+1)$-clique $Q$ and each $k$-subclique $S\subset Q$, writing $Q=S\cup\{x\}$, register the local vertex $x$ in the world $S$; for each $(k+2)$-clique $Q$ and each of its decompositions $Q=S\cup\{x,y\}$, register the local edge $xy$ in the world $S$. Every resulting local graph is precisely $G[N_G(S)]$. Find its connected components by depth-first search and check that every component $C$ has exactly $\binom{|C|}{2}$ edges. Lemma~\ref{lem:local} proves this recognizes induced-$K_r-e$-free graphs. Radix sorting the fixed-size base tuples groups all local incidences, as above. The total time, and the space when incidences are stored, are $O(n+\alpha(G)^{r-3}|E(G)|)$. For $r=4$ this reproduces a known arboricity-sensitive recognition bound \cite{lin2012}. This static construction is included as a direct consequence of the characterization, not as a separate claim of algorithmic priority for general induced-pattern detection.

\section{Probe \texorpdfstring{$K_r-e$}{Kr-e}-free graphs}\label{sec:probe}
A probe partition of $G=(V,E)$ comprises a set $P$ of probes and an independent set $N$ of nonprobes; a valid completion adds edges only within $N$ and is $K_r-e$-free. Probe diamond-free graphs, the $r=4$ case, were studied by Bonomo et al.~\cite{bonomo2015}. A partition may be prescribed or may have to be found. Run our sandwich algorithm on $(G,K_n)$, which is always feasible since $K_n$ is $K_r-e$-free. Let $H_{\min}$ denote its least output, $F=E(H_{\min})\setminus E$, and $N_F$ be the set of endpoints of the edges in $F$.

\begin{theorem}[Probe consequence]\label{thm:probe}
For every fixed $r\ge4$: (i) $G$ admits a nonpartitioned probe $K_r-e$-free completion if and only if $N_F$ is independent in $G$; (ii) a prescribed partition $(P,N)$ admits such a completion if and only if $N$ is independent in $G$ and $N_F\subseteq N$. In either positive case, $F$ is the unique least completion under edge inclusion, and $N_F$ is contained in every feasible nonprobe set. Both variants can be decided in $O(n^{r-1})$ time and space.
\end{theorem}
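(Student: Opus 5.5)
The plan is to reduce everything to the least-solution property of Theorem~\ref{thm:correct} applied to the instance $(G,K_n)$. The key observation: a probe completion with nonprobe set $N$ is exactly a sandwich for $(G,G\cup K[N])$, where $K[N]$ is the complete graph on $N$. Every such sandwich is also a sandwich for $(G,K_n)$, so by Theorem~\ref{thm:correct} it contains $H_{\min}$. Hence any valid completion edge set contains $F$. Each edge of $F$ must then lie inside $N$, which gives $N_F\subseteq N$ for every feasible nonprobe set. This already proves that $N_F$ is contained in every feasible nonprobe set and yields necessity in (ii). For (i), necessity follows because $N$ must be independent in $G$, and so its subset $N_F$ is independent as well.

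For sufficiency, first take (ii): assume $N$ is independent and $N_F\subseteq N$. Then $F\subseteq\binom{N}{2}$, so $H_{\min}=G\cup F$ is a valid completion for the partition, since it is $K_r-e$-free by Theorem~\ref{thm:correct}. For (i), when $N_F$ is independent, choose $N=N_F$ and $P=V\setminus N_F$, and apply (ii). In both cases $F$ is feasible and is contained in every completion, so it is the unique least one. One subtlety needs checking: a completion is defined relative to a partition, yet leastness is claimed over all feasible completions. This is harmless, because containment in $H_{\min}$ holds for every completion regardless of the chosen $N$.

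For complexity, apply Theorem~\ref{thm:main} with $G_2=K_n$, which has $m_2=\Theta(n^2)$ and $\alpha(K_n)=\lceil n/2\rceil$. The bound becomes $O(n+n^{r-3}n^2)=O(n^{r-1})$. The main obstacle is that $K_n$ is given implicitly, so I must make sure the construction does not pay more than this. Even listing $E_2$ explicitly costs only $O(n^2)$, and the post-enumeration bound \eqref{eq:output} gives $\sum_{t\le r-1}c_t(K_n)=O(n^{r-1})$. Enumerating the cliques of $K_n$ is trivial: they are all subsets of size at most $r-1$, which can be generated in lexicographic order within the same bound. This order even removes the need for radix sorting. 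The remaining checks are cheap: computing $N_F$ from $F$ takes $O(n^2)$, testing independence of $N_F$ or $N$ in $G$ takes $O(n+m)$ with a marker array, and testing $N_F\subseteq N$ takes $O(n)$. All of these are dominated by $O(n^{r-1})$ because $r\ge4$. The space bound follows in the same way.
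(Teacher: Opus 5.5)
Your proposal is correct and follows the paper's proof essentially step for step. You obtain necessity and the containment and leastness claims by applying Theorem~\ref{thm:correct} to $(G,K_n)$, sufficiency by taking $H_{\min}$ with either $(V\setminus N_F,N_F)$ or the prescribed $N\supseteq N_F$, and the complexity bound from Theorem~\ref{thm:main} with $m_2=\Theta(n^2)$ and $\alpha(K_n)=O(n)$; your extra remarks on enumerating the cliques of the implicit $K_n$ directly are a harmless refinement that the paper covers simply by invoking Theorem~\ref{thm:main}.
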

\begin{proof}
Any valid completion $H^*$ is a feasible sandwich between $G$ and $K_n$, so Theorem~\ref{thm:correct} gives $F\subseteq E(H^*)\setminus E$. The latter edges have both endpoints in the independent nonprobe set of its probe partition, showing necessity in both cases. Conversely, if $N_F$ is independent, the partition $(V\setminus N_F,N_F)$ together with the edges $F$ witnesses the nonpartitioned case. If a prescribed independent $N$ contains $N_F$, the same completion respects that partition. The containment of $F$ in every completion gives the leastness and containment claims. Independence and containment can be tested in $O(n+|E|)$ time. For $G_2=K_n$, $m_2=\Theta(n^2)$ and $\alpha(K_n)=O(n)$, giving $O(n^{r-1})$ by Theorem~\ref{thm:main}.
\end{proof}

Algorithm~1 of Couto et al.~\cite{couto2018} maintains a graph $G^*$, initially $G$, and a set $N$, initially empty. Whenever $G^*$ contains an induced $K_r-e$ with missing edge $xy$, it checks in the original graph $G$ that neither endpoint has a neighbor already in $N$. If this test fails, it rejects; otherwise, it adds $x,y$ to $N$ and $xy$ to $G^*$. Thus, their algorithm applies the same forced-edge closure principle while checking incrementally that the endpoints of the inserted edges can form an independent nonprobe set.

The direct $O(n^r m_2)$ sandwich bound of Dantas et al.~\cite{dantas2011} becomes $O(n^{r+2})$ for $G_2=K_n$. Couto et al.~\cite{couto2018} state $O(n^r|E(G)|)$ for the algorithm just described, where $|E(G)|$ counts input edges, but the paper does not provide an implementation-level analysis establishing this bound. In particular, it does not specify how the next obstruction is located after an insertion, and we are not aware of an analysis that supplies this missing implementation detail. A natural direct implementation that exhaustively searches for an induced $K_r-e$ in $O(n^r)$ time before each of at most $O(n^2)$ insertions takes $O(n^{r+2})$ time.

\paragraph{Practical probe implementation.} The final independence test is convenient for stating and proving Theorem~\ref{thm:probe}. In the algorithm of Couto et al. just described, independence is instead checked incrementally, but both endpoints are tested after every insertion, including endpoints already in $N$. We can avoid these repeated tests by integrating the check into our sparse sandwich algorithm on input $(G,K_n)$ at its \emph{raw-to-pending} transition: immediately when a new edge $uv\notin E(G)$ is forced, process each endpoint only if it is unmarked. Mark that endpoint and scan its adjacency list in the original $G$ for a marked neighbor; if one exists, return \emph{no} and stop the entire algorithm before processing further closure events. Already marked endpoints are skipped, so each adjacency list is scanned at most once. The original edges of $G$, which enter the queue initially, never trigger this test. Every newly forced edge belongs to any feasible completion, so a conflict rules out every independent nonprobe set. If the closure finishes without a conflict, the marked set is $N_F$ and is independent: return \emph{yes}, with the resulting completion, without assembling $N_F$ afterward or performing a separate final test. The checks cost $O(n+|E(G)|)$ overall; early rejection can save closure work on negative instances without changing the worst-case bound.

When the upper graph $K_n$ is given implicitly, Proposition~\ref{prop:components} permits processing only the complete graphs on the connected components $C_i$ of $G$. Writing $n_i=|C_i|$, this yields the refined bound $O(n+\sum_i n_i^{r-1})$ for the unrestricted probe closure, including construction of these local complete upper graphs. The worst-case bound of Theorem~\ref{thm:probe} is recovered when $G$ is connected.

For a prescribed partition, test first that $N$ is independent and set $G_2=(V,E\cup\binom N2)$. A feasible $K_r-e$-free sandwich between $G$ and this $G_2$ is exactly a completion respecting the partition. This direct reduction takes $O(n+\alpha(G_2)^{r-3}|E(G_2)|)$ time and space by Theorem~\ref{thm:main}; for $r=4$ the bound is $O(n+\alpha(G_2)|E(G_2)|)$. It can be smaller than the worst-case bound obtained with $K_n$. The single unrestricted closure used in Theorem~\ref{thm:probe}, on the other hand, exposes the canonical minimum $N_F$ for both versions.

For $r=4$, the unrestricted closure's worst-case $O(n^3)$ probe bound is superseded by the specialized $O(n+nm)$ nonpartitioned diamond-free algorithm of Grippo and Lin \cite{grippoLin2026}. Their algorithm computes a nonprobe set $N_{\min}$ contained in the nonprobe set of every valid partition, together with a completion adding edges only inside $N_{\min}$. Consequently, a prescribed independent set $N$ admits a completion exactly when $N_{\min}\subseteq N$: necessity follows from containment, and sufficiency follows because the computed completion also uses only vertices of $N$.

\section{Concluding remarks}
The bottleneck in the classical forced-edge closure is locating the edges that must be inserted after two local components meet. A filtered frontier records precisely those external vertices still adjacent to every member of a component; intersecting the frontier key sets maintains this condition, while charging discarded components and initial edge cells bounds all work by the number of $(r-1)$-cliques of the upper graph. Clique completion events extend the construction from vertex-indexed neighborhoods for diamonds to neighborhoods indexed by $(r-3)$-cliques, without scanning inactive worlds. Applied to probe recognition for fixed $r$, this implementation gives $O(n^{r-1})$ time for both partitioned and nonpartitioned inputs, improving the $O(n^{r+2})$ bound obtained from the direct sandwich closure. The same filtered frontier and clique events account for the improvement in the sandwich and probe settings.

\end{document}